\newtheorem{theorem}{Theorem}
\newtheorem{lemma}[theorem]{Lemma}
\theoremstyle{definition}
\theoremstyle{remark}
\DeclareMathOperator{\Tr}{Tr}
\def\bq{\begin{eqnarray}}
\def\eq{\end{eqnarray}}
\def\bqq{\begin{align*}}
\def\eqq{\end{align*}}
\def\eps{\varepsilon}
\newcommand\1{{\ensuremath {\mathds 1} }}
\def\R {\mathbb{R}}
\def\R {\mathbb{R}}
\def\d{{\rm d}}
\title[] {A proof of the Lieb--Thirring inequality via the Besicovitch covering lemma}
\author[P. T. Nam]{Phan Th\`anh Nam} 
\address{LMU Munich, Department of Mathematics, Theresienstrasse 39, D-80333 Munich, Germany} 
\email{nam@math.lmu.de}
\begin{document}

\maketitle

\begin{center}
{\em Dedicated to Professor Duong Minh Duc on the occasion of his 70th birthday}  
\end{center}

\begin{abstract} We give a proof of the Lieb--Thirring inequality on the kinetic energy of orthonormal functions by using a microlocal technique, in which the  uncertainty and exclusion principles are combined through the use of the Besicovitch covering lemma. 

%
\end{abstract}

\date{\today}


\section{Introduction}

The celebrated Lieb-Thirring inequality \cite{LieThi-75,LieThi-76} states that for every trace class operator $0\le \gamma \le 1$ on $L^2(\R^d)$ with density $\rho_\gamma(x)=\gamma(x,x)$, the kinetic energy lower bound 
\bq \label{eq:LT}
\Tr(-\Delta \gamma) \ge K_d \int_{\R^d} \rho_\gamma(x)^{1+\frac{2}{d}} \d x
\eq
holds with a  constant $K_d>0$ depending only on the dimension $d\ge 1$ (in particular, $K_d$ is independent of $\gamma$). Here assuming the spectral decomposition 
\bq \label{eq:gamma-un} \gamma= \sum_{n=1}^\infty \lambda_n |u_n\rangle \langle u_n|
\eq
with $\{u_n\}_{n=1}^\infty$ being orthonormal in $L^2(\R^d)$ and  $0\le \lambda_n\le 1$, we denote  
\bq \label{eq:gamma-rho}
\Tr(-\Delta \gamma)  = \sum_{n=1}^\infty \lambda_n \int_{\R^d} |\nabla u_n|^2, \quad \rho_\gamma(x)=\sum_{n=1}^\infty \lambda_n |u_n(x)|^2. 
\eq

Strictly speaking, the trace class condition $\Tr \gamma=\sum_{n=1}^\infty \lambda_n <\infty$ is not essential. The bound \eqref{eq:LT} still holds for every operator $0\le \gamma\le 1$ provided that $\gamma$ admits a spectral decomposition as in \eqref{eq:gamma-un} such that we can define the kinetic energy $\Tr(-\Delta \gamma)$ and the density $\rho_\gamma$ via \eqref{eq:gamma-rho} (this extension can be obtained easily from a density argument using the Lebesgue monotone convergence theorem). In connection to quantum physics, $\gamma$ usually stands for the one-body density matrix of a quantum state, and the trace class condition means that the number of particles is finite. A  key feature of \eqref{eq:LT} is that the bound is independent of the number of particles, thus allowing us to treat large systems (including even infinite systems via a standard limiting argument).

By some sort of the uncertainty principle (e.g. Sobolev's inequality), it is not difficult to see that
\bq \label{eq:S}
\left( \int_{\R^d} |\nabla u|^2 \right) \left( \int_{\R^d} |u|^2 \right)^{\frac 2 d} \ge K_{\rm So} \int_{\R^d} |u(x)|^{2(1+\frac{2}{d})} \d x, \quad \forall u\in H^1(\R^d).
\eq
Here the optimal constant $K_{\rm So}>0$ depends only on the dimension $d\ge 1$. Consequently, if $\{u_n\}_{n=1}^N$ is a family of {\em normalized} functions in $L^2(\R^d)$, then using \eqref{eq:S} we have
\bq \label{eq:S1}
\sum_{n=1}^N \int_{\R^d} |\nabla u_n|^2 \ge   K_{\rm So} \int_{\R^d} \sum_{n=1}^N |u_n|^{2(1+\frac{2}{d})}   \ge \frac{K_{\rm So} }{N^{2/d}}  \int_{\R^d} \left( \sum_{n=1}^N |u_n|^{2}  \right)^{1+\frac 2 d}. 
\eq 
The optimality of the constant $K_{\rm So} N^{-2/d}$ in \eqref{eq:S1} can be seen easily by considering the special case $u_n=u$ for all $n$. On the other hand, if $\{u_n\}_{n=1}^N$ are {\em orthonormal}, then the Lieb-Thirring inequality \eqref{eq:LT} with $\gamma=\sum_{n=1}^N |u_n\rangle \langle u_n|$ implies that  \eqref{eq:S1} holds with $K_{\rm So} N^{-2/d}$ replaced by a constant $K_d$ depending only on the dimension $d\ge 1$, which is very useful when $N\to \infty$. The orthogonality here essentially corresponds to the condition $0\le \gamma\le 1$ in \eqref{eq:LT}.

As an application, the Lieb--Thirring inequality \eqref{eq:LT} gives a semiclassical estimate for the kinetic energy of fermionic systems, which is a crucial ingredient of the proof of the stability of matter in \cite{LieThi-75}. In this aspect, the constraint $0\le \gamma \le 1$ corresponds to Pauli's exclusion principle of fermions (see e.g.  \cite[Theorem 3.2]{LieSei-10}). Therefore,  \eqref{eq:LT} can be interpreted as an elegant combination of the uncertainty and exclusion principles in quantum mechanics.

Historically, in the original proof of Lieb and Thirring in  \cite{LieThi-75,LieThi-76}, the kinetic bound \eqref{eq:LT} was derived from its dual version 
\bq \label{eq:LT-V}
\Tr[(-\Delta + V(x))_-]  \ge -L_d \int_{\R^d} |V_-(x)|^{1+\frac d 2} \d x
\eq
where $t_-=\min(t,0)$ and 
$$
K_d\left( 1 + \frac 2 d \right) = \left[ L_d \left( 1+ \frac d 2\right) \right]^{-\frac 2 d}.
$$
In this approach, the sum of all negative eigenvalues of the Schr\"odinger operator $-\Delta + V(x)$ is estimated via the number of eigenvalues $\ge 1$ of the
Birman--Schwinger operator  $
\sqrt{|V_-(x)|} (-\Delta +E)^{-1} \sqrt{|V_-(x)|}
$ with $E>0$. This duality approach is powerful since it relates the nonlinear inequality \eqref{eq:LT} to a more linear problem, namely a spectral property of a linear operator.  The spectral estimate \eqref{eq:LT-V} is interesting in its own right and has been generalized in various directions; see \cite{Frank-20} for a recent review with many open questions. 

In this article, we will give a direct proof of the kinetic bound \eqref{eq:LT}, without relying on the duality argument. The first proof in this direction goes back to the work of Eden and Foias in 1991 \cite{EdeFoi-91} for one dimension, which was later extended by Dolbeault, Laptev, and Loss \cite{DolLapLos-08} for higher dimensions. More recently, two other direct proofs were given by Rumin \cite{Rumin-10,Rumin-11} and by Lundholm and Solovej \cite{LunSol-13,LunSol-14}; we refer to \cite{Nam-19} for a review on these two approaches. See also \cite{Sabin-16} for another direct proof. 

We will discuss a simplification of the Lundholm--Solovej approach. The key observation of this approach, which actually goes back to the first proof of the stability of matter by Dyson and Lenard \cite{DysLen-67,DysLen-67b},  is that a local version of the Lieb--Thirring inequality on cubes can be obtained easily by local versions of the uncertainty and exclusion principles. As realized in \cite{LunSol-13,LunSol-14}, it is possible to obtain the global bound  \eqref{eq:LT} by combining such local estimates by choosing an appropriate family of disjoint cubes. A simplification of the Lundholm--Solovej approach was already given by Lundholm--Nam--Portmann \cite{LunNamPor-16} where the local uncertainty and exclusion principles are combined by using a new covering lemma which is inspired by a ``stopping time argument" in harmonic analysis. Our aim here is to provide with a further simplification where the local uncertainty and exclusion principles are combined naturally via the classical {\em Besicovitch covering lemma}, thus making the proof very elementary. We hope that our proof will be not only interesting from the pedagogical point of view, but also helpful for finding new generalizations of \eqref{eq:LT}   in the future. In particular, we expect that the method represented here can be strengthened by coupling with the ideas of using dyadic sets in recent works \cite{KogNam-20,MalNgu-21}.

Note that the Besicovitch covering lemma was already used in Rozenblum's proof of the Cwikel--Lieb--Rozenblum inequality \cite{Cwikel-77,Lieb-76,Rozenblum-76} on the number of negative eigenvalues of Schr\"odinger operators for $d\ge 3$:
\bq \label{eq:CLR-V}
\Tr \Big[ \1_{(-\infty,0)} ( -\Delta + V(x) ) \Big] \le C_d \int_{\R^d} |V_-(x)|^{\frac d 2} \d x. 
\eq
In this approach, using Sobolev's inequality, the linear operator $-\Delta + V(x)$ can be localized in bounded domains such that it has at most one bound state in each domain, and then the number of the domains is controlled by the Besicovitch covering lemma. Heuristically, since \eqref{eq:CLR-V} is stronger than \eqref{eq:LT-V} (when $d\ge 3$), it is not surprising that the Besicovitch covering lemma is also helpful for the proof of the Lieb--Thirring inequality. However, since \eqref{eq:LT}  seems less linear than \eqref{eq:LT-V} and \eqref{eq:CLR-V}, we think it is still interesting to see that a variant of this microlocal technique works directly for \eqref{eq:LT}. 

%

\section{Proof}

Our proof of \eqref{eq:LT} includes two steps. In the first step, we derive a local version of \eqref{eq:LT} in any ball $B\subset \R^d$ such that  $\int_B \rho_\gamma =2$ (we can replace $2$ by any fixed number bigger than 1). This step is done by combining local versions of the uncertainty and exclusion principles, exactly in the spirit of the proofs in \cite{LunSol-13,LunSol-14,LunNamPor-16} (the only difference is that we are interested in balls rather than cubes). In the second step, we cover $\R^d$ by a suitable collection of balls via the Besicovitch covering lemma, and conclude   \eqref{eq:LT}.

Now we come to the details. Let us denote, with the spectral decomposition $\gamma= \sum_n \lambda_n |u_n\rangle \langle u_n|$ and any ball $B\subset \R^d$, that  
\begin{align} \label{eq:Tr-B-local}
\Tr(-\Delta_B \gamma)  = \sum_n \lambda_n  \int_{B}  |\nabla u_n|^2, \quad \rho_\gamma(x)=\sum_n \lambda_n  |u_n(x)|^2. 
\end{align}
We will always denote by $C_d>0$ a general (large) constant depending only on the dimension $d\ge 1$. The value of $C_d$ may change line by line. 

\bigskip
\noindent
{\bf Step 1:} First, we prove a local version of \eqref{eq:LT}. 

\begin{lemma}[Lieb--Thirring inequality for balls] \label{lem:LT-ball}For any ball $B\subset \R^d$ and any trace class operator $0\le \gamma \le 1$ on $L^2(B)$ satisfying $\int_B \rho_\gamma = 2$, we have
\bq \label{eq:LT-ball}
\Tr(-\Delta_B \gamma) \ge \frac{1}{C_d} \int_{B} \rho_\gamma(x)^{1+\frac{2}{d}} \d x.
\eq
The constraint $\int_B \rho_\gamma = 2$ can be replaced by $a\le \int_B \rho_\gamma \le b$ for any constants $1<a<b<\infty$ (then $C_d$ depends on $d,a,b$). 
\end{lemma}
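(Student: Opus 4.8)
The plan is to establish the local bound \eqref{eq:LT-ball} by combining two ingredients that both hold on a single ball $B$: a \emph{local uncertainty principle} and a \emph{local exclusion principle}. For the uncertainty part, I would use the fact that on a fixed ball $B$ of radius $R$, Sobolev's inequality \eqref{eq:S} together with a Poincar\'e-type estimate gives, for the one-body density matrix $0\le\gamma\le1$ with density $\rho_\gamma$,
\bq
\Tr(-\Delta_B\gamma)\ \ge\ \frac{1}{C_d}\frac{\int_B\rho_\gamma^{1+2/d}}{\left(\int_B\rho_\gamma\right)^{2/d}}\ -\ \frac{C_d}{R^2}\int_B\rho_\gamma .
\eq
Indeed, applying \eqref{eq:S} to each $u_n$ and summing with weights $\lambda_n$, then using the operator convexity of $t\mapsto t^{1+2/d}$ (or simply $\sum\lambda_n|u_n|^{2(1+2/d)}\ge N^{-2/d}(\sum\lambda_n|u_n|^2)^{1+2/d}$ replaced by the sharp operator version via $0\le\gamma\le1$), one controls the full density power by the kinetic energy up to the cost of replacing $\int|\nabla u_n|^2$ on $\R^d$ by $\int_B|\nabla u_n|^2$, which is handled by extending $u_n$ outside $B$ with a bounded $H^1$ extension — the $R^{-2}$ term is the price of that extension. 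Since $\int_B\rho_\gamma=2$, the denominator $(\int_B\rho_\gamma)^{2/d}$ is just a dimensional constant, so this reads
\bq \label{eq:unc-step}
\Tr(-\Delta_B\gamma)\ \ge\ \frac{1}{C_d}\int_B\rho_\gamma^{1+2/d}\ -\ \frac{C_d}{R^2}.
\eq

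For the exclusion part, I would use that $\gamma$ has ``at least two particles'' in $B$: since $0\le\gamma\le1$ and $\int_B\rho_\gamma=\Tr\gamma=2$, the operator $\gamma$ cannot be rank one, hence there is a two-dimensional subspace on which $\gamma$ is comparable to the identity, and an elementary argument (orthonormality plus $\int_B\rho_\gamma=2$) forces the lowest Dirichlet eigenvalue to contribute, giving a lower bound
\bq \label{eq:excl-step}
\Tr(-\Delta_B\gamma)\ \ge\ \frac{1}{C_d R^2}\int_B\rho_\gamma\ =\ \frac{2}{C_d R^2}.
\eq
More precisely, one compares $\gamma$ with the spectral projection of $-\Delta_B$ onto its lowest eigenvalue: since that projection is rank one and $\Tr\gamma=2>1$, a positive fraction of $\Tr\gamma$ must live on the orthogonal complement, where $-\Delta_B\ge\lambda_2(B)\sim R^{-2}$ by scaling. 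Now I interpolate: multiply \eqref{eq:unc-step} by a small dimensional weight $\theta$ and \eqref{eq:excl-step} by $(1-\theta)$, or more simply add a suitable multiple of \eqref{eq:excl-step} to \eqref{eq:unc-step} to absorb the negative $-C_d/R^2$ term, yielding
\bq
\Tr(-\Delta_B\gamma)\ \ge\ \frac{1}{C_d}\int_B\rho_\gamma^{1+2/d},
\eq
which is \eqref{eq:LT-ball}. The key point that makes the cancellation dimensionally consistent is that $\int_B\rho_\gamma$ is pinned to the constant $2$, so the ``error'' $R^{-2}$ from uncertainty and the ``gain'' $R^{-2}$ from exclusion are of exactly the same order in $R$ and the ratio is a pure dimensional constant.

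I expect the main obstacle to be getting the local uncertainty estimate \eqref{eq:unc-step} with the correct scaling in $R$ and with a constant that does not degenerate: one must be careful that the Sobolev/Poincar\'e argument on $B$ respects the constraint $0\le\gamma\le1$ (otherwise the factor $N^{-2/d}$ reappears and ruins the bound for large $N$), which is precisely where the orthonormality of the $u_n$ enters and where the proof genuinely uses more than Sobolev's inequality alone. The cleanest route is to localize $\gamma$ via a smooth cutoff supported near $B$, use the IMS-type localization formula $\Tr(-\Delta\gamma)\ge\Tr(-\Delta(\chi\gamma\chi))-C_dR^{-2}\Tr\gamma$, and then apply on $\chi\gamma\chi$ the sharp kinetic inequality for density matrices with $0\le\cdot\le1$ that follows from combining \eqref{eq:S} with the operator inequality $\sum\lambda_n^{1+2/d}\le\sum\lambda_n$; handling the cross terms between the Sobolev exponent and the $\lambda_n$-weights so that the full power $\rho_\gamma^{1+2/d}$ emerges (rather than $\sum\lambda_n|u_n|^{2(1+2/d)}$) is the delicate bookkeeping step. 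The extension to $a\le\int_B\rho_\gamma\le b$ is then immediate by rescaling $\gamma$ and tracking the resulting dependence of the constant on $a$ and $b$.
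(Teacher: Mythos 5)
Your overall architecture matches the paper's: combine a local uncertainty estimate of the form $\Tr(-\Delta_B\gamma)\ge\frac{1}{C_d}\int_B\rho_\gamma^{1+2/d}-\frac{C_d}{R^2}$ with a local exclusion estimate of order $R^{-2}$ (obtained from the Neumann spectral gap plus the rank-one ground-state projection and $0\le\gamma\le1$), and add a suitable multiple of the latter to cancel the negative term, with $\int_B\rho_\gamma=2$ pinning both $R^{-2}$ terms to the same scale. The exclusion step and the combination step are essentially the paper's.

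The genuine gap is in your uncertainty step, and you even flag it yourself as ``the delicate bookkeeping step'' without resolving it. Summing Sobolev over the $u_n$ with weights $\lambda_n$ produces $\sum_n\lambda_n\int_B|u_n|^{2(1+2/d)}$, which does \emph{not} control $\int_B\rho_\gamma^{1+2/d}$ from below (the superadditivity of $t\mapsto t^{1+2/d}$ runs in the wrong direction, and the weights $\lambda_n$ enter with power $1$, not $1+2/d$). Your proposed fix of promoting $\sum\lambda_n|u_n|^{2(1+2/d)}\ge N^{-2/d}\rho_\gamma^{1+2/d}$ to a ``sharp operator version via $0\le\gamma\le1$'' is exactly the Lieb--Thirring inequality one is trying to prove, so this route is circular; and the alternative IMS-plus-``sharp kinetic inequality for density matrices'' route invokes the same unproven statement. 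What is missing is the Hoffmann--Ostenhof inequality $\Tr(-\Delta_B\gamma)\ge\int_B\bigl|\nabla\sqrt{\rho_\gamma}\bigr|^2$ (a pointwise convexity-of-gradient fact, proved via the diamagnetic inequality, with no reference to orthogonality or to $\gamma\le1$). This collapses the many-body kinetic energy to a single scalar function $g=\sqrt{\rho_\gamma}$, to which the one-function local Gagliardo--Nirenberg estimate $\int_B|\nabla g|^2\ge\frac{1}{C_d}\frac{\int_B|g|^{2(1+2/d)}}{(\int_B|g|^2)^{2/d}}-\frac{C_d}{|B|^{2/d}}\int_B|g|^2$ applies directly and yields precisely your \eqref{eq:unc-step} once $\int_B\rho_\gamma=2$ is used. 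Without Hoffmann--Ostenhof (or an equivalent reduction), your uncertainty step does not close.
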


Lemma \ref{lem:LT-ball} is a consequence of the following well-known results. 

\begin{lemma}[Hoffmann-Ostenhof inequality] \label{lem:HO} For any measurable subset $B\subset \R^d$ and any trace class operator $0\le \gamma \le 1$ on $L^2(B)$ we have
$$
\Tr(-\Delta_B \gamma)  \ge \int_B |\nabla \sqrt{\rho_\gamma}|^2. 
$$
\end{lemma}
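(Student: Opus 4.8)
The heart of the matter is a pointwise Cauchy--Schwarz bound; everything else is bookkeeping needed to make sense of $\nabla\sqrt{\rho_\gamma}$. Write $\gamma=\sum_n\lambda_n|u_n\rangle\langle u_n|$ with $0\le\lambda_n\le1$ and $\{u_n\}$ orthonormal in $L^2(B)$. For a.e.\ $x$ one has $\nabla\rho_\gamma(x)=\sum_n\lambda_n\nabla|u_n|^2=2\sum_n\lambda_n\Re(\bar u_n\nabla u_n)$, so, writing $\lambda_n=\sqrt{\lambda_n}\cdot\sqrt{\lambda_n}$ and applying Cauchy--Schwarz in $n$,
\[
|\nabla\rho_\gamma(x)|\le 2\sum_n\lambda_n|u_n(x)|\,|\nabla u_n(x)|\le 2\,\rho_\gamma(x)^{1/2}\Big(\sum_n\lambda_n|\nabla u_n(x)|^2\Big)^{1/2}.
\]
Dividing by $2\rho_\gamma^{1/2}$ this says $|\nabla\sqrt{\rho_\gamma}(x)|^2=|\nabla\rho_\gamma(x)|^2/(4\rho_\gamma(x))\le\sum_n\lambda_n|\nabla u_n(x)|^2$ wherever $\rho_\gamma(x)>0$, and integrating over $B$ would give the claim. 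The only thing to justify is the chain-rule identity $\nabla\sqrt{\rho_\gamma}=\nabla\rho_\gamma/(2\sqrt{\rho_\gamma})$ and the behaviour on $\{\rho_\gamma=0\}$.

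First I would reduce to finite rank. With $\gamma_N=\sum_{n=1}^N\lambda_n|u_n\rangle\langle u_n|$, the Lebesgue monotone convergence theorem gives $\rho_{\gamma_N}\uparrow\rho_\gamma$ in $L^1(\R^d)$ (the density of a trace class operator is integrable) and $\Tr(-\Delta_B\gamma_N)\uparrow\Tr(-\Delta_B\gamma)$; since then $\sqrt{\rho_{\gamma_N}}\to\sqrt{\rho_\gamma}$ in $L^2(\R^d)$, weak lower semicontinuity of the Dirichlet integral reduces the statement to the case $N<\infty$ (if $\sup_N\int_B|\nabla\sqrt{\rho_{\gamma_N}}|^2=\infty$ there is nothing to prove). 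For finite $N$ and $\eps>0$, the map $t\mapsto\sqrt{t+\eps}$ is smooth on $[0,\infty)$ with derivative bounded by $(2\sqrt\eps)^{-1}$, and $\rho_{\gamma_N}\in W^{1,1}_{\rm loc}(\R^d)$ with the weak gradient computed above, so the chain rule gives $\sqrt{\rho_{\gamma_N}+\eps}\in W^{1,1}_{\rm loc}(\R^d)$ with $\nabla\sqrt{\rho_{\gamma_N}+\eps}=\nabla\rho_{\gamma_N}/(2\sqrt{\rho_{\gamma_N}+\eps})$. The pointwise bound then yields
\[
|\nabla\sqrt{\rho_{\gamma_N}+\eps}|^2=\frac{|\nabla\rho_{\gamma_N}|^2}{4(\rho_{\gamma_N}+\eps)}\le\frac{\rho_{\gamma_N}}{\rho_{\gamma_N}+\eps}\sum_{n=1}^N\lambda_n|\nabla u_n|^2\le\sum_{n=1}^N\lambda_n|\nabla u_n|^2,
\]
so $\int_B|\nabla\sqrt{\rho_{\gamma_N}+\eps}|^2\le\Tr(-\Delta_B\gamma_N)$ uniformly in $\eps$.

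Letting $\eps\downarrow0$ concludes: $\sqrt{\rho_{\gamma_N}+\eps}\to\sqrt{\rho_{\gamma_N}}$ in $L^2_{\rm loc}(\R^d)$, and since $\nabla\rho_{\gamma_N}=0$ a.e.\ on the level set $\{\rho_{\gamma_N}=0\}$, the gradients converge a.e.\ to $\nabla\rho_{\gamma_N}/(2\sqrt{\rho_{\gamma_N}})$ on $\{\rho_{\gamma_N}>0\}$ and to $0$ on $\{\rho_{\gamma_N}=0\}$; this a.e.\ limit is therefore the weak gradient of $\sqrt{\rho_{\gamma_N}}$, which in particular lies in $H^1_{\rm loc}(\R^d)$. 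As the integrands are dominated by the fixed $L^1(B)$ function $\sum_{n\le N}\lambda_n|\nabla u_n|^2$, dominated convergence gives $\int_B|\nabla\sqrt{\rho_{\gamma_N}}|^2\le\Tr(-\Delta_B\gamma_N)$, and combined with the finite-rank reduction this proves the lemma. The only genuinely delicate point is this handling of the zero set of the density and the identification of the weak gradient of $\sqrt{\rho_\gamma}$; the algebraic core is nothing more than Cauchy--Schwarz.
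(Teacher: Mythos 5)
Your proof is correct, and it takes a genuinely different route to the key pointwise estimate than the paper does. The paper invokes the diamagnetic inequality $|\nabla|u||\le|\nabla u|$, applies it to $v_1$, $v_2$ and to $|v_1|+\mathbf{i}|v_2|$ to get $|\nabla v_1|^2+|\nabla v_2|^2\ge|\nabla\sqrt{|v_1|^2+|v_2|^2}|^2$, and then iterates by induction to obtain $\sum_n|\nabla v_n|^2\ge|\nabla\sqrt{\sum_n|v_n|^2}|^2$, with $v_n=\sqrt{\lambda_n}u_n$. You instead compute $\nabla\rho_\gamma=2\sum_n\lambda_n\Re(\bar u_n\nabla u_n)$ and apply Cauchy--Schwarz in $n$ directly, which produces the identical pointwise bound $|\nabla\sqrt{\rho_\gamma}|^2\le\sum_n\lambda_n|\nabla u_n|^2$ in one stroke; this is essentially the argument of the original Hoffmann-Ostenhof paper. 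What the paper's route buys is that all regularity issues are outsourced to the cited theorems (the diamagnetic inequality and the convexity of the gradient are stated for $H^1$ functions), so it can simply say ``integrate the pointwise inequality.'' Your route, being more hands-on, has to pay for this by justifying the chain rule for $\sqrt{\cdot}$, the behaviour on the level set $\{\rho_\gamma=0\}$, and the passage from finite rank to infinite rank --- which you do carefully and correctly via the $\sqrt{t+\eps}$ regularization, monotone convergence, and weak lower semicontinuity of the Dirichlet energy. Both are standard and elementary; yours is arguably more self-contained, while the paper's is shorter on the page because it defers to references.
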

\begin{proof} This result was first proved in \cite{HO-77}. For the reader's convenience, here we recall another derivation using the diamagnetic inequality $|\nabla |u| | | \le |\nabla u|$ (see \cite[Theorem 7.21]{LieLos-01}). Indeed, for two complex functions $\{v_n\}_{n=1}^2$, using the diamagnetic inequality for $v_1, v_2$ and $|v_1|+ {\bf i} |v_2|$ (with ${\bf i}^2=-1$) we have 
\begin{align} \label{eq:u1u2}
|\nabla v_1|^2 + |\nabla v_2|^2 &\ge |\nabla |v_1||^2 + |\nabla |v_2||^2 = |\nabla (|v_1| + {\bf i} |v_2|)|^2 \ge \left |\nabla \sqrt{ |v_1|^2 + |v_2|^2}\right|^2.  
\end{align}
Note that \eqref{eq:u1u2} is called the convexity of the gradient which was first published in \cite[Theorem 4.3]{Benguria-79} and \cite[Lemma 4]{BBL-81} (see also \cite[Theorem 7.8]{LieLos-01}). By induction, we find that for any functions $\{v_n\}_{n=1}^\infty$,
\begin{align} \label{eq:u1u2un}
\sum_{n=1}^\infty |\nabla v_n|^2 \ge \left|\nabla \sqrt{ \sum_{n=1}^\infty |v_n|^2}\right|^2.  
\end{align}
We obtain the desired estimate by integrating the pointwise inequality \eqref{eq:u1u2un} with $v_n=\sqrt{\lambda_n} u_n$ and using the definition \eqref{eq:Tr-B-local}. 
\end{proof}

\begin{lemma}[Local uncertainty principle] \label{lem:LU} For any ball $B\subset \R^d$ and any function $g\in H^1(B)$ we have 
$$
\int_B |\nabla g|^2 \ge \frac{1}{C_d} \frac{\int_B |g|^{2(1+ \frac 2 d)}}{(\int_B |g|^2)^{\frac 2 d}} - \frac{C_d}{ |B|^{\frac 2 d}} \int_B |g|^2 .
$$
\end{lemma}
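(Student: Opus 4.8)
The plan is to reduce the estimate to a combination of the Sobolev (or Gagliardo--Nirenberg) inequality on $\R^d$ and a Poincar\'e-type control of the average of $g$ on the ball. First I would dispose of the low-dimensional cases: when $d=1$ one uses the one-dimensional Gagliardo--Nirenberg inequality $\|g\|_{L^6(B)}^6 \le C(\|g'\|_{L^2(B)} \|g\|_{L^2(B)}^3 + |B|^{-1}\|g\|_{L^2(B)}^4)$ directly, and $d=2$ is handled similarly via $\|g\|_{L^4}^4 \le C(\|\nabla g\|_{L^2}\|g\|_{L^2}^3/|B| + \dots)$ after scaling; for $d\ge 3$ the exponent $2(1+2/d)$ lies strictly between $2$ and the critical Sobolev exponent $2d/(d-2) = 2^*$, so interpolation is available. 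The main case is therefore $d\ge 3$, and I would carry it out as follows.

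For $d\ge3$, decompose $g = \langle g\rangle_B + h$ where $\langle g\rangle_B = |B|^{-1}\int_B g$ and $\int_B h = 0$. By the Sobolev inequality on $\R^d$ applied to an extension of $h$ (or directly the Sobolev--Poincar\'e inequality on the ball, which holds with a dimensional constant after rescaling to the unit ball), $\|h\|_{L^{2^*}(B)} \le C_d \|\nabla h\|_{L^2(B)} = C_d\|\nabla g\|_{L^2(B)}$. Then interpolate: writing $p = 2(1+\frac2d)$, there is $\theta\in(0,1)$ with $\frac1p = \frac{1-\theta}{2} + \frac{\theta}{2^*}$, so $\|h\|_{L^p(B)} \le \|h\|_{L^2(B)}^{1-\theta}\|h\|_{L^{2^*}(B)}^{\theta} \le \|g\|_{L^2(B)}^{1-\theta}(C_d\|\nabla g\|_{L^2(B)})^{\theta}$, using $\|h\|_{L^2(B)}\le\|g\|_{L^2(B)}$. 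Next, bound the constant part: $\|\langle g\rangle_B\|_{L^p(B)}^p = |B|\,|\langle g\rangle_B|^p \le |B|\cdot(|B|^{-1}\|g\|_{L^1(B)})^p \le |B|^{1-p}(|B|^{1/2}\|g\|_{L^2(B)})^p = |B|^{1-p/2}\|g\|_{L^2(B)}^p$, i.e. $\|\langle g\rangle_B\|_{L^p(B)} \le |B|^{1/p - 1/2}\|g\|_{L^2(B)}$. Combining via the triangle inequality in $L^p$ and raising to the power $p$,
\[
\int_B |g|^p \le C_d\Big( \|g\|_{L^2(B)}^{(1-\theta)p}\|\nabla g\|_{L^2(B)}^{\theta p} + |B|^{1-p/2}\|g\|_{L^2(B)}^p\Big).
\]
A direct check of the exponents gives $\theta p = 2$ (this is exactly why the Sobolev exponent matches up) and $(1-\theta)p = p-2 = \frac4d$; dividing through by $(\int_B|g|^2)^{2/d}$ and rearranging yields precisely
\[
\int_B |\nabla g|^2 \ge \frac{1}{C_d}\,\frac{\int_B |g|^{2(1+\frac2d)}}{(\int_B|g|^2)^{2/d}} - \frac{C_d}{|B|^{2/d}}\int_B|g|^2,
\]
after using Young's inequality $ab \le \frac12 a^2 + \frac12 b^2$ to convert the product $\|g\|^{4/d}\|\nabla g\|^2$-type term (note $\theta p = 2$ means the gradient appears to the power $2$, so no Young step is even needed on that term — one simply solves for $\int_B|\nabla g|^2$). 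I should double-check that $\theta\in(0,1)$, which holds iff $2 < p < 2^*$, i.e. $2 < 2 + \frac4d < \frac{2d}{d-2}$; the left inequality is clear and the right is equivalent to $d > 2$, consistent with the case split.

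The main obstacle is getting the $|B|$-dependence to come out with exactly the stated power $|B|^{-2/d}$ and with a purely dimensional constant: the cleanest way is to rescale at the start to the unit ball $B_1$ (so $g(x) = \tilde g(x/r)$ with $r$ the radius), apply scale-invariant Sobolev--Poincar\'e on $B_1$, and track how each term scales under $x\mapsto rx$ — the three quantities $\int|\nabla g|^2$, $\int|g|^p / (\int|g|^2)^{2/d}$, and $|B|^{-2/d}\int|g|^2$ all scale the same way ($r^{d-2}$ times their $\tilde g$-counterparts), which is the structural reason the inequality is consistent and forces the exponent $2/d$. A secondary technical point is justifying the Sobolev--Poincar\'e inequality on a ball with a dimension-only constant; this is standard (it follows from the Sobolev extension theorem for balls, or from the representation $h(x) - \langle h\rangle_B = c_d\int_B \frac{(x-y)\cdot\nabla h(y)}{|x-y|^d}\,dy$ combined with the Hardy--Littlewood--Sobolev inequality), so I would simply cite it. Everything else is bookkeeping of exponents.
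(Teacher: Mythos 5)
Your proposal is correct for $d\ge 3$ but takes a genuinely different route from the paper, and your treatment of the low-dimensional cases as written contains an error. Here is the comparison.

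The paper scales to the unit ball, extends $g$ to a function $G\in H^1(\R^d)$ via the Sobolev extension theorem for balls, and then applies the \emph{subcritical} Sobolev inequality $\|G\|_{L^{2(1+2/d)}(\R^d)}\le C_d\|G\|_{H^1(\R^d)}$, which is valid for all $d\ge 1$ since the exponent $2(1+2/d)$ is always below the critical one. Optimizing over the dilations $G_\ell(x)=\ell^{d/2}G(\ell x)$ then yields the Gagliardo--Nirenberg form $\|G\|_{L^{2(1+2/d)}}\le C_d\|G\|_{H^1}^{d/(d+2)}\|G\|_{L^2}^{2/(d+2)}$ directly. This treats all dimensions uniformly with a single two-line argument. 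You instead decompose $g=\langle g\rangle_B + h$ with $\int_B h=0$, control the fluctuation $h$ via the Sobolev--Poincar\'e inequality into the \emph{critical} space $L^{2^*}(B)$, interpolate $L^p$ between $L^2$ and $L^{2^*}$ (correctly identifying $\theta=d/(d+2)$, $\theta p = 2$, $(1-\theta)p=4/d$), and bound the constant part by Cauchy--Schwarz to get the power $|B|^{1-p/2}=|B|^{-2/d-1}\cdot|B|$. The exponent bookkeeping checks out and, as you note, $\theta p=2$ means no Young step is needed; for $d\ge 3$ this is a perfectly valid and reasonably elegant alternative that avoids the extension theorem in favour of the (essentially equivalent) Sobolev--Poincar\'e inequality on a ball. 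What the paper's route buys is precisely the avoidance of your case split: using the subcritical embedding means never touching $2^*$, so $d=1,2$ require no separate treatment.

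The one genuine flaw is your $d=1$ statement. You write
$\|g\|_{L^6(B)}^6 \le C\bigl(\|g'\|_{L^2(B)}\|g\|_{L^2(B)}^3 + |B|^{-1}\|g\|_{L^2(B)}^4\bigr)$,
but under $g\mapsto\lambda g$ the left side scales as $\lambda^6$ while both right-hand terms scale as $\lambda^4$, so this cannot hold. The correct one-dimensional Gagliardo--Nirenberg bound you want is
$\|g\|_{L^6(B)}^6 \le C\bigl(\|g'\|_{L^2(B)}^2\|g\|_{L^2(B)}^4 + |B|^{-2}\|g\|_{L^2(B)}^6\bigr)$,
which follows from $\|f\|_{L^\infty(\R)}^2\le\|f'\|_{L^2}\|f\|_{L^2}$ applied to an extension of $g$, and which divides by $(\int_B|g|^2)^{2}$ to give exactly the claimed estimate. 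Your $d=2$ display is left deliberately vague (``$+\dots$''), so I cannot verify it, but the same homogeneity sanity check should be applied there. These cases are tangential to your main argument, but since you chose a route that requires them, they do need to be carried out carefully; the paper's extension-plus-subcritical-Sobolev approach avoids the issue entirely.
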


\begin{proof} By scaling we may assume that $B$ is the unit ball. Then by a standard result (see \cite[Theorem 7.41]{Adams-75}), we can find 
$G\in H^1(\R^d)$, an extension of $g$, such that  
$$G_{|B}=g,\quad \|G\|_{H^1(\R^d)} \le C_d \|g\|_{H^1(B)},\quad  \|G\|_{L^2(\R^d)} \le C_d \|g\|_{L^2(B)}.$$
Thus it suffices to show the Gagliardo-Nirenberg type inequality
\begin{align} \label{eq:G-H-U}
\|G\|_{L^{2(1+2/d)}(\R^d)} \le C_d \|G\|_{H^1(\R^d)}^{\frac d {d+2}} \|G\|_{L^2(\R^d)}^{\frac 2 {d+2}},\quad \forall G\in H^1(\R^d),
\end{align}
which is indeed equivalent to Sobolev's inequality $\|G\|_{L^{2(1+2/d)}} \le C_d \|G\|_{H^1}$ (Sobolev's inequality seems weaker, but we can apply it for $G_\ell(x)=\ell^{d/2} G(\ell x)$ and then optimize over $\ell>0$ to obtain \eqref{eq:G-H-U}). See also \cite[Lemma 8]{LunNamPor-16} for another derivation of \eqref{eq:G-H-U} via the fractional Sobolev embedding theorem. 
\end{proof}

\begin{lemma}[Local exclusion principle] \label{lem:LE}For any open ball $B\subset \R^d$ and any trace class operator $0\le \gamma \le 1$ on $L^2(B)$
\bq \label{eq:Neumann-gap}
\Tr(-\Delta_B \gamma) \ge \frac{1}{C_d |B|^{2/d}}  \Big( \int_B \rho_\gamma  - 1\Big).
\eq
\end{lemma}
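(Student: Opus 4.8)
The plan is to reduce the bound to a statement about the Neumann Laplacian on the ball $B$, which has eigenvalue $0$ (the constant function) and a spectral gap to the second eigenvalue of order $|B|^{-2/d}$ by scaling. First I would note that, by translation and scaling, it suffices to treat the case where $B$ is the unit ball centered at the origin; the powers of $|B|^{2/d}$ in \eqref{eq:Neumann-gap} are exactly the ones dictated by the scaling $-\Delta \mapsto \ell^{-2}(-\Delta)$ under $x \mapsto \ell x$, so the general case follows by tracking these factors. Let $-\Delta_B^{\cN}$ denote the Neumann Laplacian on $B$ and let $0 = \mu_1 < \mu_2 \le \mu_3 \le \cdots$ be its eigenvalues, with $\phi_1 = |B|^{-1/2}$ the normalized constant eigenfunction. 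The key input is that $\mu_2 > 0$, i.e.\ the Neumann Laplacian on a (connected, Lipschitz) ball has a positive spectral gap; on the unit ball this is a fixed dimensional constant, so $\mu_2 \ge 1/C_d$.

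The main step is the operator inequality
\[
-\Delta_B^{\cN} \ge \mu_2 \, \big( \1 - |\phi_1\rangle\langle \phi_1| \big) = \frac{1}{C_d}\big( \1 - |\phi_1\rangle\langle\phi_1|\big)
\]
as quadratic forms on $H^1(B)$, which holds simply because $-\Delta_B^{\cN} = \sum_{k\ge 1}\mu_k |\phi_k\rangle\langle\phi_k| \ge \sum_{k\ge 2}\mu_2 |\phi_k\rangle\langle\phi_k| = \mu_2(\1 - |\phi_1\rangle\langle\phi_1|)$. Since $\Tr(-\Delta_B \gamma) \ge \Tr(-\Delta_B^{\cN}\gamma)$ (the form domain of the Neumann Laplacian is all of $H^1(B)$, so the Neumann form is the smallest among Laplacian realizations, and in any case one has the form inequality $\int_B |\nabla u|^2 \ge \langle u, -\Delta_B^{\cN} u\rangle$ for $u \in H^1(B)$), testing against $\gamma = \sum_n \lambda_n |u_n\rangle\langle u_n|$ gives
\[
\Tr(-\Delta_B \gamma) \ge \frac{1}{C_d}\Big( \Tr \gamma - \Tr\big(|\phi_1\rangle\langle\phi_1|\,\gamma\big)\Big) = \frac{1}{C_d}\Big( \Tr\gamma - \langle \phi_1, \gamma\, \phi_1\rangle\Big).
\]
Now $0 \le \gamma \le 1$ gives $\langle\phi_1,\gamma\,\phi_1\rangle \le \|\phi_1\|^2 = 1$, while $\Tr\gamma = \int_B \rho_\gamma$ by \eqref{eq:Tr-B-local}. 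Hence $\Tr(-\Delta_B\gamma) \ge \frac{1}{C_d}\big(\int_B \rho_\gamma - 1\big)$, which is \eqref{eq:Neumann-gap} for the unit ball; rescaling restores the factor $|B|^{-2/d}$.

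The only genuine obstacle is justifying the positivity of the Neumann gap $\mu_2 > 0$ on the ball. This is standard — it follows, for instance, from the compact Sobolev (Rellich) embedding $H^1(B) \hookrightarrow L^2(B)$, which makes the Neumann Laplacian's resolvent compact so that its spectrum is discrete, together with the fact that the eigenvalue $0$ is simple because $B$ is connected (a Neumann eigenfunction with eigenvalue $0$ has vanishing gradient, hence is constant). One could alternatively invoke the Poincar\'e inequality on the ball, $\int_B |u - \bar u_B|^2 \le C_d \int_B |\nabla u|^2$ with $\bar u_B = |B|^{-1}\int_B u$, which is precisely the statement $\mu_2 \ge 1/C_d$ and is itself a consequence of Rellich compactness plus a contradiction argument. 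In keeping with the paper's goal of an elementary exposition, I would cite the Poincar\'e inequality on balls (it is classical, e.g.\ via the extension argument already used in the proof of Lemma~\ref{lem:LU}) rather than develop the spectral theory from scratch: the inequality $\int_B |u - \bar u_B|^2 \le C_d |B|^{2/d}\int_B |\nabla u|^2$ applied to each $u_n$ and summed with weights $\lambda_n$, combined with $0 \le \gamma \le 1$ to bound $\sum_n \lambda_n |\langle \phi_1, u_n\rangle|^2 \le 1$, yields \eqref{eq:Neumann-gap} directly without any reference to Neumann eigenfunctions.
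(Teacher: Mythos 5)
Your proposal is correct and follows essentially the same route as the paper: reduce to the Neumann spectral gap on $B$, write it as the operator inequality $-\Delta_B \ge \frac{1}{C_d|B|^{2/d}}(\1_B - P)$ with $P$ the rank-one projection onto the constant function, take the trace against $\gamma$, and use $0\le\gamma\le 1$ to bound $\Tr(P\gamma)\le 1$. The extra material you include (scaling reduction, justification of $\mu_2>0$, the Poincar\'e-inequality reformulation) is sound but not a different method.
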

\begin{proof} This bound goes back to Dyson and Lenard \cite{DysLen-67}. It is a consequence of the Neumann spectral gap 
$$
 -\Delta_B \ge  \frac{1}{C_d |B|^{2/d}} (\1_B- P)  \quad \text{ on }L^2(B)
$$
where $P= |B|^{-1} |\1_B\rangle\langle \1_B|$, the rank-one projection on the unique ground state $|B|^{-\frac 1 2} \1_B$ of the Neumann Laplacian. Then we can take the trace against $\gamma$ and conclude by using  $\Tr (\1_B\gamma)=\int_{B} \rho_\gamma$ and $\Tr (P\gamma)\le \Tr P =1$ (since $0\le \gamma \le 1$).  
\end{proof}

Now we are ready to provide 

\begin{proof}[Proof of Lemma \ref{lem:LT-ball}] Denote $M=\int_B \rho_\gamma$. From Lemmas \ref{lem:HO} and \ref{lem:LU} we have
$$
\Tr(-\Delta_B \gamma)  \ge \int_B |\nabla \sqrt{\rho_\gamma}|^2 \ge \frac{1}{C_d M^{2/d}}  \int_B \rho_\gamma^{1+2/d} - \frac{C_d}{ |B|^{2/d}} M.   
$$
Combining with \eqref{eq:Neumann-gap} we find that for every $\eps>0$, 
$$
 (1+\eps) \Tr(-\Delta_B \gamma)  \ge   \frac{\eps}{C_d M^{2/d}}  \int_B \rho_\gamma^{1+2/d}  + \frac{1}{ |B|^{2/d}} \Big[ \frac{M-1}{C_d} - \eps C_d M\Big].
$$
When $M=2$, we can always choose $\eps= \eps_d >0$ such that 
$$
\frac{M-1}{C_d} - \eps C_d M \ge 0,
$$
and the desired estimate follows. 
\end{proof}

\bigskip
\noindent
{\bf Step 2:} 
In order to put  Lemma \ref{lem:LT-ball} in good use, we need

\begin{lemma}[Besicovitch covering lemma \cite{Besicovitch-45,Morse-47}] Let $E$ be a bounded subset of $\R^d$. Let $\mathcal{F}$ be a collection of balls in $\R^d$ such that every point $x\in E$ is the center of a ball from $\mathcal{F}$. Then there is a sub-collection $\mathcal{G} \subset \mathcal{F}$ such that 
$$
\1_E \le \sum_{B\in \mathcal G} \1_B \le b_d,
$$
namely $E$ is covered by $\bigcup_{B\in \mathcal G} B$ and every point in $E$ belongs to at most $b_d$ balls from $\mathcal{G}$. Here the constant $b_d>0$  depends only on the dimension $d\ge 1$ (in particular, $b_d$ is independent of $E$).
\end{lemma}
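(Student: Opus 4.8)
\textbf{The plan} is to select $\mathcal{G}$ by a greedy procedure --- repeatedly take a ball of almost maximal radius whose centre has not yet been covered --- and then check separately that $\mathcal{G}$ covers $E$ and that it has bounded multiplicity; the argument is purely geometric and uses nothing from the earlier part of the paper. If $\mathcal{F}$ contains balls of arbitrarily large radius centred at points of $E$, one such ball already contains the bounded set $E$ and we are done with $b_d=1$; so we may assume $R:=\sup\{r:B(x,r)\in\mathcal{F},\ x\in E\}<\infty$. Choose $B_1,B_2,\dots$ recursively: given $B_1,\dots,B_{k-1}$, set $E_k=E\setminus\bigcup_{j<k}B_j$; if $E_k=\emptyset$, stop; otherwise pick $B_k=B(x_k,r_k)\in\mathcal{F}$ with $x_k\in E_k$ and $r_k>\tfrac12\sup\{r:B(x,r)\in\mathcal{F},\ x\in E_k\}$. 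Two facts are then immediate: for $j<k$ we have $x_k\notin B_j$, hence $|x_k-x_j|\ge r_j$; and since $E_k\subseteq E_j$ the supremum defining $r_j$ majorises $r_k$, hence $r_k<2r_j$.

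\textbf{Covering.} If the recursion terminates, $\mathcal{G}=\{B_k\}$ covers $E$ by construction. Otherwise, combining the two facts above, for $j<k$ we get $|x_k-x_j|\ge r_j>\tfrac13 r_j+\tfrac13 r_k$, so the balls $\{B(x_k,r_k/3)\}_k$ are pairwise disjoint; since their centres lie in the bounded set $E$ and $r_k\le R$, these disjoint balls sit in a fixed bounded region, forcing $\sum_k r_k^d<\infty$ and in particular $r_k\to0$. If some $x\in E$ belonged to no $B_k$, then $x\in E_k$ for every $k$, so $r_k>\tfrac12 r$ where $r>0$ is the radius of a ball from $\mathcal{F}$ centred at $x$ --- contradicting $r_k\to 0$. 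Hence $\mathcal{G}$ covers $E$.

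\textbf{Bounded multiplicity, and the main difficulty.} Fix $y\in\R^d$ and put $I=\{k:y\in B_k\}$; we must show $\#I\le b_d$ with $b_d$ depending only on $d$. The available tools are the two facts from the selection, specialised to indices $j<k$ in $I$: the centres satisfy $|x_j-x_k|\ge r_j$, the radii satisfy $r_k<2r_j$, and --- since $y$ lies in the open ball $B_j$ while the later centre $x_k$ does not --- also $|x_j-y|>r_j-r_k$. This is the genuinely delicate step: one must turn these inequalities into a bound on $\#I$ depending on $d$ alone. A natural route is first to observe that within a single dyadic radius-scale the centres $\{x_k:k\in I\}$ are comparably far from $y$ and pairwise separated by a fixed fraction of that scale, so a volume/packing estimate caps their number, and then to use the near-boundary relation $|x_j-y|>r_j-r_k$ to limit how many scales can be represented at $y$. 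This combinatorial-geometric core is exactly the substance of the Besicovitch covering lemma; since only the \emph{existence} of $b_d$ is needed for \eqref{eq:LT}, in practice one simply cites \cite{Besicovitch-45,Morse-47} and proceeds to Step~2 of the main proof.
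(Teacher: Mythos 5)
The paper states the Besicovitch covering lemma as a classical black box and simply cites \cite{Besicovitch-45,Morse-47}; it gives no proof, so there is no in-paper argument for yours to match. On its own merits, the first two-thirds of your sketch is sound: the reduction (if $\mathcal F$ contains arbitrarily large balls centred in the bounded set $E$, one of them already covers $E$), the greedy selection with $r_k>\tfrac12\sup\{r:B(x,r)\in\mathcal F,\ x\in E_k\}$, the consequences $|x_k-x_j|\ge r_j$ and $r_k<2r_j$ for $j<k$, the disjointness of the balls $B(x_k,r_k/3)$, the conclusion $r_k\to 0$, and the covering of $E$ are all correct.

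The genuine gap is exactly where you flag it: the bounded-overlap estimate $\sum_{B\in\mathcal G}\1_B\le b_d$ is not proved. You correctly isolate the three relations $|x_j-x_k|\ge r_j$, $r_k<2r_j$, $|x_j-y|>r_j-r_k$ for $j<k$ in $I=\{k:y\in B_k\}$, but the ``dyadic scale plus packing'' plan is only a gesture and would not close as described. A volume/packing count does control the number of indices in $I$ with mutually comparable radii, since their centres are pairwise $\gtrsim\max r$-separated yet all within $O(\max r)$ of $y$. The hard part, however, is the indices with widely separated radii, $r_j\gg r_k$: there one must show that the directions from $y$ to the centres $x_j$ of much larger balls containing $y$ subtend angles bounded away from zero, and this angular estimate (using $|x_j-y|>r_j-r_k$ together with the non-containment $x_k\notin B_j$) is the subtle core of Besicovitch's argument, not a routine consequence of the three listed inequalities. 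As you note, only the existence of $b_d$ is used, and the paper itself just cites the lemma; but as a proof attempt, what you have established is the Vitali-type covering property, with the multiplicity bound --- the actual content of Besicovitch over Vitali --- left unproved.
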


Now we conclude 

\begin{proof}[Proof of the Lieb--Thirring inequality \eqref{eq:LT}] By a density argument it suffices to assume that $\gamma= \sum_{i=1}^N \lambda_n |u_n\rangle \langle u_n|$ with $N<\infty$, with orthonormal functions $\{u_n\}_{n=1}^N \subset C_c^\infty(\R^d)$, and with $0\le \lambda_n \le 1$. 

If $\int_{\R^d}\rho_\gamma\le 2$, then from Lemma \ref{lem:HO} and Sobolev's inequality \eqref{eq:S} we find that 
$$
\Tr (-\Delta \gamma) \ge \int_{\R^d} |\nabla \sqrt{\rho_\gamma}|^2 \ge \frac{\int_{\R^d} \rho_\gamma^{1+\frac 2 d}}{ C_d (\int_{\R^d} \rho_\gamma)^{2/d}} \ge \frac{1}{C_d 2^{\frac 2 d}}\int_{\R^d} \rho_\gamma^{1+\frac 2 d}. 
$$
Hence, it remains to consider the case 
$\int_{\R^d} \rho_\gamma \ge 2.$ Since $\rho_\gamma$ is continuous and its support $E\subset \R^d$ is bounded, for every $x\in E$ we can find a ball $B_x \subset \R^d$ centered at $x$ such that 
\bq  \label{eq:choice-B}
\int_{B_x} \rho_\gamma = 2. 
\eq
Applying the Besicovitch covering lemma, from the collection of balls $\mathcal{F}=\{B_x\}_{x\in E}$, we can find a sub-collection $\mathcal{G} \subset \mathcal {F}$ such that 
\bq  \label{eq:E-subset-G}
\1_E \le \sum_{B\in \mathcal G} \1_B \le b_d.
\eq
The second bound in \eqref{eq:E-subset-G} implies that
\begin{align*} 
b_d \int_{\R^d} |\nabla u_n|^2 \ge  \sum_{B\in \mathcal{G}} \int_{B} |\nabla u_n|^2
\end{align*} 
for all $n$. Therefore, 
\begin{align*} 
b_d \Tr (-\Delta \gamma) &=  \sum_{n=1}^N \lambda_n b_d \int_{\R^d} |\nabla u_n|^2 \ge   \sum_{n=1}^N  \lambda_n \sum_{B\in \mathcal{G}}  \int_{B} |\nabla u_n|^2\\
&= \sum_{B\in \mathcal{G}}  \sum_{n=1}^N \lambda_n  \int_{B} |\nabla u_n|^2   = \sum_{B \in \mathcal G} \Tr (-\Delta_{B} \gamma). 
\end{align*} 
On the other hand, for every ball $B \subset \mathcal G$, we have $\int_{B} \rho_\gamma = 2$ and $0\le \gamma \le 1$ on $L^2(B)$ (since $0\le \gamma \le 1$ on $L^2(\R^d)$). Thus Lemma \ref{lem:LT-ball} gives 
\bq 
\Tr(-\Delta_B \gamma) \ge \frac 1 C_d \int_{B} \rho_\gamma ^{1+\frac{2}{d}}.
\eq
Finally, combining the local bound \eqref{eq:E-subset-G} with the covering property \eqref{eq:E-subset-G} we conclude that
\bq 
b_d \Tr (-\Delta \gamma) \ge \sum_{B \in \mathcal G} \Tr (-\Delta_{B} \gamma) \ge  \frac 1 {C_d}  \sum_{B \in \mathcal G} \int_{B} \rho_\gamma^{1+2/d} \ge    \frac 1 {C_d} \int_{\R^d} \rho_\gamma^{1+2/d}.
\eq
The proof is complete. 
\end{proof}

\end{document}